\providecommand{\U}[1]{\protect\rule{.1in}{.1in}}
\newtheorem{theorem}{Theorem}
\newtheorem{corollary}{Corollary}
\newtheorem{lemma}{Lemma}
\theoremstyle{definition}
\begin{document}

\title{Compressive Random Access Using A Common Overloaded Control Channel}
\author{Gerhard Wunder$^{1}$, Peter Jung$^{2}$ and Mohammed
Ramadan$^{3}$\\
$^{1}$Fraunhofer Heinrich-Hertz-Institut, Berlin
(\textit{gerhard.wunder@hhi.fraunhofer.de})\\
$^{2}$Technische Unversit{\"{a}}t Berlin
(\textit{peter.jung@tu-berlin.de})\\
$^{3}$Fraunhofer Heinrich-Hertz-Institut, Berlin
(\textit{mohammed.ramadan@hhi.fraunhofer.de})}
\maketitle
\let\thefootnote\relax\footnote{Gerhard Wunder is now also with the Freie
Unversit{\"{a}}t Berlin leading the new Heisenberg Group on Information and
Communication Theory. This work was carried out within
DFG grants WU 598/7-1 and WU 598/8-1 and JU-2795/3-1 (DFG Priority Program on Compressed Sensing),
and the 5GNOW project, supported by the European Commission within FP7 under grant 318555.
Peter Jung was also supported by DFG grant JU-2795/2.}

\maketitle

\begin{abstract}
We introduce a "one shot" random access procedure where users can send a
message without a priori synchronizing with the network. In this procedure a
common overloaded control channel is used to jointly detect sparse user
activity and sparse channel profiles. The detected information is subsequently
used to demodulate the data in dedicated frequency slots. We analyze the
system theoretically and provide a link between achievable
rates and standard compressing sensing estimates in terms of explicit
expressions and scaling laws. Finally, we support our
findings with simulations in an LTE-A-like setting allowing "one shot"
sparse random access of 100 users in 1ms.

\end{abstract}



\section{Introduction}

Sporadic traffic generating devices, e.g. machine-type communication (MTC),
are most of the time inactive but regularly access the Internet for
minor/incremental updates with no human interaction \cite{Wunder2014_COMMAG}.
Sporadic traffic will dramatically increase in the 5G market and, obviously,
cannot be handled with the bulky 4G random access procedures. Two major
challenges must be addressed:\ 1) unprecedent number of devices
asynchroneously access the network over a limited resource and 2)\ the same
resource carries control signalling and payload. Dimensioning the control
channels according to classical theory results in a severe waste of resources
which, even worse, does not scale towards the requirements of the IoT. On the
other hand, since typically user activity, channel profiles and message sizes
are compressible within a very large receive space, sparse signal processing
methodology is a natural framework to tackle the sporadic traffic.

\textbf{Preliminary Work}. The key findings of sparse signal processing are
that in an under-determined system undergoing noise the signal components can
be indeed identified if 1) the measurements are suitably composed and 2) the
signal space is sparse (or more generally "structured"), i.e. only a limited
number of elements in some given basis are non-zero. It has then soon been
recognized that this can be exploited for multiple access with sparse user
activity (see \cite{Wunder2015_ACCESS} for a recent overview). This was
extended to asynchronous fading channels \cite{Applebaum2012_PHYCOM} as well
as asynchronous multipath block fading channels with known/unknown (sparse)
multipath channel \cite{Dekorsy2013_ISWCS}. Notably, all these concepts are
fundamentally different from a classical "overloaded" CDMA channel
\emph{without sparsity }\cite{Wunder2006_TSP} where user activity and data
detection are separate steps.

\begin{figure*}[t]
\centering\includegraphics[width=1\linewidth]{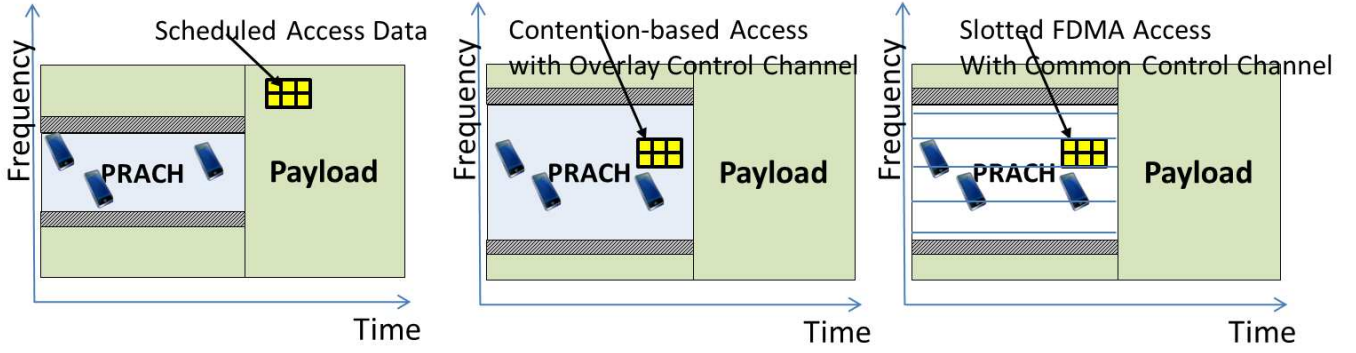}
\caption{\label{fig:prach:model} Random access concepts: a) standard procedure b) 5GNOW overlayed control channel \cite{Wunder2015_ACCESS,Wunder2014_ICC}
 c) 5GNOW common control channel (in this paper)}
\end{figure*}

Recently, massive MTC random access for 5G has fuelled the topic, particularly
within the EU projects METIS and 5GNOW
\cite{Popovski14,Wunder2014_GC,Wunder2014_ICC}. Ref. \cite{Popovski14}
investigates the interaction of advanced multiuser detection and a random
access scheme called \emph{coded slotted ALOHA}. However, the effect of
channel estimation and data detection errors (which is necessary due to
broadband) and error propagation in the intereference cancellation scheme is
crucial \cite{Popovski14} and must be carefully considered in follow-up work.
In \cite{Huang2013} new coding schemes and limits for MTC with random arrivals
in a \emph{slotted ALOHA} (using either time or frequency slots) have been
presented. While narrowband in nature and without advanced sparsity promoting
multiuser detection, it is argued that for such scenarios the effect of
channel estimation errors becomes negligible (however, implicitely, by using
arbitrary long preambles!). Consequently, recent concepts deal either with
data or channel estimation and an overall architecture which includes
identification, channel estimation, asynchronicity and data detection in
\textquotedblright one shot\textquotedblright\ is an open topic.

\textbf{Contributions}. In this paper, we propose a ALOHA (O)FDMA protocol
similar to \cite{Huang2013} where users select frequency slots of flexible
size without coordination. In such asynchronous scenario, it becomes very
inefficient to reserve control ressource for every slot since delay spread is
large, i.e. coherence bandwidth is small. We suggest an \emph{overloaded
common control channel} which is accessed by all active users at the same
time, see Fig. \ref{fig:prach:model}. Sparse signal processing will be used for joint (sparse) user activity
detection and (sparse) channel estimation. One control channel concept is that
data and control channels are superimposed so that control is spread over the
whole signal space and collected back within some (small) observation window
\cite{Wunder2014_ICC}. Another concept which is followed here is to fully
separate control and data which requires a careful investigation of actual
payload vs. control signaling ratio. In the following sections the details of
such sparsity aware random access scheme are outlined, analyzed and simulated.

\textbf{Notations}. $\lVert x\rVert_{\ell_{q}}=(\sum_{i}|x_{i}|^{q})^{1/q}$ is
the usual notion of $\ell_{q}$-norms and $\lVert x\rVert:=\lVert x\rVert
_{\ell_{2}}$, denote with $\text{supp}(x):=\{i\,:\,x_{i}:=\langle
e_{i},x\rangle\neq0\}$ the support of $x$ in a given fixed (here canonical)
basis $\{e_{i}\}_{i=1}^{n}$. The size of its support is denoted as $\lVert
x\rVert_{\ell_{0}}:=|\text{supp}(x)|$. $W$ is the (unitary) Fourier matrix
with elements $(W)_{kl}=n^{-\frac{1}{2}}e^{-i2\pi kl/n}$ for $k,l=0\dots n-1$,
hence, $W^{-1}=W^{\ast}$ where $W^{\ast}$ is the adjoint of $W$. We use here
also $\hat{x}=Wx$ to denote Fourier transforms and $\odot$ means point-wise
product. $I_{n}$ is the identity matrix in $\mathbb{C}^{n}$, diag$(x)$ is some
arbitrary diagonal matrix with $x\in\mathbb{C}^{n}$ on its diagonal.

\section{Compressive random access}

\subsection{"One-shot" transmitter}

Let $p_{u}\in\mathbb{C}^{n}$ be an pilot
(preamble) sequence from a given (random) set $\mathcal{P}\subset
\mathbb{C}^{n}$ and $x_{u}\in\mathbb{C}^{n}$ be an unknown (uncoded) data
sequence $x_{u}\in\mathcal{X}^{n}\subset\mathbb{C}^{n}$ both for the $u$-th
user with $u\in\{1,...,U\}$ and $U$ is the (fixed) maximum set of users in the
systems. Note that in our system $n$ is a very large number, e.g. 24k.
Due to the random zero-mean nature of $x_{u}$ we have $\frac{1}{n}E\lVert p_{u}+x_{u}\rVert^{2}=1$, i.e.
the total (normalized) transmit power is unity. Provided user $u$ is active we set:
\[
\alpha:=\frac{1}{n}\lVert p_{u}\rVert^{2}\quad\text{and}\quad\alpha^{\prime
}:=1-\alpha=\frac{1}{n}E\lVert x_{u}\rVert^{2}%
\]
Hence, the control signalling fraction of the power is $\alpha$.
If a user is not active then we set both $p_{u}=x_{u}=0$, i.e. either a user
is active and seeks to transmit data or it is inactive.

We will use a cyclic model, which is achieved with OFDM-like signaling and the
use of an appropriate cyclic prefix and restrict our model here to
time-invariant channels. Each vector $h_{u}\in\mathbb{C}^{T_{cp}}$ denotes the
sampled channel impulse response (CIR) where $T_{cp}$ is the length of the
cyclic prefix. We assume to have a priori support knowledge on each $h_{u}$:
(i) bounded support, i.e. $\text{supp}(h_{u})\subseteq\lbrack0,\dots
,T_{cp}-1]$ due to the cyclic prefix and (ii) sparsity, i.e. $\lVert
h_{u}\rVert_{l_{0}}\leq k_{1}$. Eventually, we assume that only $k_{2}$ users
out of $U$ in total are actually active. Define $k:=k_{1}k_{2}$.

In an OFDM system the FFT size $n$ is then chosen as $n\gg T_{cp}$. Let
$[h,0]\in\mathbb{C}^{n}$ denote the zero-padded CIR. With these assumptions
the received signal is then:
\begin{align*}
y  &  =\sum_{u=0}^{U-1}\text{circ}([h,0])(p_{u}+x_{u})+e\\
y_{\mathcal{B}}  &  =\Phi_{\mathcal{B}}y%
\end{align*}
Here, $\text{circ}([h,0])\in\mathbb{C}^{n}$ denotes the circulant matrix with
$[h,0]$ in its first column; $\Phi_{\mathcal{B}}$ denotes the overall
measurement matrix (to be specified later on) referring to a "frequency
observation window" $\mathcal{B}$. The number of subcarriers in $\mathcal{B}$
is $m:=|\mathcal{B}|$. The AWGN is denoted as $e\in\mathbb{C}^{n}$ with
$E(ee^{\ast})=\sigma^{2}I_{n}$. For circular convolutions we have
circ$([h,0])p=\sqrt{n}\cdot W^{\ast}(\hat{h}\odot\hat{p})$ so that:
\begin{align*}
y  &  =\sum_{u=1}^{U}W^{\ast}\left[  (\sqrt{n}\hat{h}_{u}\odot(\hat{p}%
_{u}+\hat{x}_{u})\right]  )+\hat{e}\\
y_{\mathcal{B}}  &  =\Phi_{\mathcal{B}}y
\end{align*}
where $e$ and $\hat{e}$ are statistically equivalent.

For the users' data the entire bandwidth $\mathcal{B}^{C}$ is dived into $B$
frequency slots. A standard assumption is that users' arrivals are modeled as
an Poisson process with rate $\lambda$ which is even true if retransmissions
are incorporated \cite{Huang2013}. Each user selects a slot in a ALOHA (O)FDMA
fashion. Clearly, by the single interval contention period the users will
choose the same FDMA slot with some probabilty, called outage event.
Obviously, for a fixed rate requirement throughput maximization means
outage probability minimization. Define the ordered user rates as
$R_{1},R_{2},...,R_{U}$. Then, it is shown in \cite{Huang2013} that the
average throughput (for any user $k$) is given by:
\[
T\left(  \lambda,\mathcal{R}\right)  =\lambda \exp\left(  -\frac{\lambda}%
{B}\right)  \cdot\Pr\left(  R_{k}>\mathcal{R}\right) \cdot\mathcal{R}
\]
with a rate constraint $\mathcal{R}$. In the
following we assume that the probabilties $\Pr(R_{k}>\mathcal{R})$ are mainly
dependent on the receive powers (user position, slow fading effects) while the
fast fading effects are averaged out due to coding over subcarriers. Hence,
user rates are ergodic and are calculated as expectations over the fading
distributions. The relevant expressions under erroneous channel estimation
will be provided in this paper.

\subsection{Receiver operations}

All performance indicators depend on the number of subcarriers in
$\mathcal{B}$ (control) and $\mathcal{B}^{C}$ (data). The goal is the
limitation to a small observation window $\mathcal{B}$. Let $P_{\mathcal{B}%
}:\mathbb{C}^{n}\rightarrow\mathbb{C}^{m}$ be the corresponding projection
matrix, i.e. the submatrix of $I_{n}$ with rows in $\mathcal{B}$. For
identifying which preamble is in the system we can consider $\hat{y}$ and use
the frequencies in $\mathcal{B}$, i.e. $\Phi_{\mathcal{B}}=P_{\mathcal{B}}W$,
so that:
\[
y_{\mathcal{B}}:=P_{\mathcal{B}}\sum_{u=1}^{U}\left[  \sqrt{n}\hat{h}_{u}%
\odot(\hat{p}_{u}+\hat{x}_{u})\right]  +P_{\mathcal{B}}\hat{e}%
\]
Notably in \cite{Wunder2014_ICC} we have introduced randomized pointwise
multipliers $\xi\in\mathbb{C}^{n}$ in time domain instead, denoted by the
corresponding $n\times n$ diagonal matrix $M_{\xi}:=\text{diag}(\xi)$, which
is favorable for the sparse recovery. Hence, the $m\times n$ sampling matrix
$\Phi_{\mathcal{B}}=P_{\mathcal{B}}WM_{\xi}$ is considered which, however,
comes at the cost of performance loss due to superimposed pilot subcarriers.

For algorithmic solution, we can stack the users as:%
\begin{align*}
y  &  =\sum_{u=1}^{U}\text{circ}(h_{u})(p_{u}+x_{u})+e\\
&  =D(p)h+C(h)x+e
\end{align*}
where $D(p):=[$circ$(p_{1}),\dots,$circ$(p_{U})]\in\mathbb{C}^{n\times Un}$
and $C(h):=[$circ$([h_{1},0]),\dots,$circ$([h_{U},0])]\in\mathbb{C}^{n\times
Un}$ are the corresponding compound matrices, respectively $p=[p_{1}%
^{T}\ p_{2}^{T}\ ...p_{U}^{T}]^{T}$ und $h=[h_{1}^{T}\ h_{2}^{T}\ ...h_{U}%
^{T}]^{T}$ are the corresponding compound vectors. If we assume each
user-channel vector $h_{u}$ to be $k_{1}$-sparse and $k_{2}$ are active then
$h$ is $k$-sparse.

For joint user activity detection and channel estimation exploiting the
sparsity we can use the standard basis pursuit denoising (BPDN) approach:%
\begin{equation}
\hat{\hbar}=\arg\min_{h}\lVert h\rVert_{\ell_{1}}\;\text{s.t. }\lVert
\Phi_{\mathcal{B}}\,D(p)h-y\rVert_{\ell_{2}}\leq\epsilon\label{eqn:bpdn}%
\end{equation}
Moreover, several greedy methods exists for sparse reconstruction. In
particular, for CoSAMP \cite{Needell08} explicit guarantees in reconstruction
performance are known and can be used instead of BPDN. After running the
algorithm in eqn. (\ref{eqn:bpdn}) the decision variables $\lVert\hbar
_{u}\rVert_{\ell_{2}}^{2}\;\forall u,$ are formed, indicating that if
$\lVert\hbar_{u}\rVert_{\ell_{2}}^{2}>\xi$ where $\xi>0$ is some predefined
threshold the user is considered active and its corresponding data is
detected. In \cite{Wunder2014_ICC} the correponding pilot signal is subtracted
from the recieved signal by interference cancellation. Here, we assume full
separation of data and control so that supp$(p_{u})\subseteq\mathcal{B}%
\;\forall u$. Denote the error of this operation as $d:=\hat{\hbar}-h$. Hence,
the received signal is given by:
\begin{equation}
\hat{y}=(\sqrt{n}\hat{\hbar}+\hat{d})\odot\hat{x}+\hat{e} \label{eqn:subc}%
\end{equation}
which a set of parallel channels each with power $E(|\hat{x}_{k}%
|^{2})=1-\alpha$, $|\hat{p}|^{2}=\alpha$ and $E(|\hat{e}_{k}|^{2})=\sigma^{2}$.

\section{Performance analysis}

It is possible to find the scaling of rates if one makes the following
assumptions: (i) all users employ independent Gaussian codebooks (ii) if a
user is not detected the corresponding data is discarded (iii) if
\emph{restricted isometry property} (RIP) is not satisfied (see the discussion
in following subsection), then the data of all is fully discarded
(iv) if the actual noise vector is larger than the
estimated noise vector then the data of all is discarded. In our analysis we
will work with outage probabilities for (i) and (ii). Refined estimates which
include (iii) and (iv) require more assumptions on the sampling model, noise
distribution and recovery procedure and will therefore appear in a separate work.

\subsection{RIP and performance guarantees}

Let $\Sigma_{k}:=\{x\in\mathbb{C}^{n}\,:\,\lVert x\rVert_{\ell_{0}}\leq k\}$
denotes the $k$-sparse vectors.
A matrix $\Phi$ is called $k$-RIP if exists $0\leq\delta_{k}<1$ such that
$|\lVert\Phi x\rVert^{2}-\lVert x\rVert^{2}|\leq
\delta_{k}\lVert x\rVert^{2}$
for all $x\in\Sigma_{k}$. There is a well-known result
\cite{candes:rip2008} on BPDN and we call this as the $Q_{1}$-estimator for
$x$ given $y$, i.e. $\tilde{x}=Q_{1}(y)$: If $\Phi_{\mathcal{B}}$ is $2k$-RIP
with $\delta_{2k}<\sqrt{2}-1$ and $\lVert e\rVert_{\ell_{2}}\leq\epsilon$
then:
\[
\lVert Q_{1}(\Phi x+e)-x\rVert_{\ell_{2}}\leq c_{1}\epsilon%
\]
with $c_{1}=4\sqrt{1+\delta_{2k}}/1-(1+\sqrt{2})\delta_{2k}$ (in particular,
for $\delta_{2k}=.2$ this gives $c_{1}=8.5$). It is known that $\delta
_{2k}\leq\sqrt{2}-1$ is a necessary condition. In \cite{Foucart2010} the bound
has been improved to $\delta_{2k}\leq3/(4+\sqrt{6})\approx0.4652$. Similar
bound exists for CoSAMP \cite{Needell08}, i.e. $\delta_{4k}\leq\sqrt
{2/5+\sqrt{73}}\approx0.3843$ \cite{Foucart2012}.

Up this point we have mentioned uniform reconstruction guarantees (for any
$x$) for a given matrix $\Phi$ and these are related to its RIP-constant
$\delta_{2k}$. It is still difficult to constructively design measurements
matrices with sufficiently small RIP constants. In this paper we use the
results in \cite{Rudelson:2007}: For any unitary matrix $U$, the measurement
matrix $P_{\mathcal{B}}\cdot U$ with $\mathcal{B}$ chosen uniformly at random
with cardinality $m$ such that:
\[
m\geq c^{\prime}\delta^{-2}\mu^{2}k\log^{5}(n)%
\]
has RIP with probability $\geq1-c^{\prime\prime}n^{-1}$ and $\delta_{2k}%
\leq\delta$, where $\mu=\mu(U,\text{Id})$ is the incoherence between $U$ and
the identity (standard basis).

\subsection{User detection}
Let us first calculate the propability of not detecting an active user
$P_{md}(\xi)$ ("missed detection"), and falsely detecting an inactive user
$P_{fa}(\xi)$ ("false alarm"). Recall that for a given pilot power $\alpha$
the channel estimation error is $d=\hbar-h$ with $\hbar=Q_{1}(y/\sqrt{\alpha
})$.
\begin{theorem}
We have for a fixed sampling matrix $\Phi$ with RIP-constant $\delta_{2k}$:%
\begin{align*}
P_{md}(\xi)  &  \leq F(\xi)+ \frac{c_{r}(\xi)c_{1}(\delta_{2k})^{2}m\sigma
^{2}}{\alpha k_{2}}\\
P_{fa}(\xi)  &  \leq\frac{c_{1}(\delta_{2k})^{2}m}{\alpha\xi\sigma^{2}}%
\end{align*}
where $c_{r}(\xi)$ is defined below in eqn. \eqref{eqn:c_r}.
\end{theorem}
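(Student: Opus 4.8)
The plan is to turn each detection error into a tail event for the per-user channel-estimation error $\lVert d_u\rVert_{\ell_2}$ and bound those tails with Markov's inequality, after first converting the BPDN/RIP guarantee into a bound on $E\lVert d\rVert_{\ell_2}^2$. First I would use the full-separation assumption $\text{supp}(p_u)\subseteq\mathcal{B}$: the window then carries only pilots, channel and noise, so after the normalization $y/\sqrt{\alpha}$ the effective sensing matrix is $\Phi=\Phi_{\mathcal{B}}D(p)/\sqrt{\alpha}$, which is the $2k$-RIP matrix of the previous subsection with constant $\delta_{2k}<\sqrt{2}-1$, and the effective noise is $P_{\mathcal{B}}e/\sqrt{\alpha}$. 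Taking $\epsilon=\lVert P_{\mathcal{B}}e\rVert_{\ell_2}/\sqrt{\alpha}$, the $Q_1$-guarantee holds for every noise realization, so $\lVert d\rVert_{\ell_2}^2\le c_1(\delta_{2k})^2\lVert P_{\mathcal{B}}e\rVert_{\ell_2}^2/\alpha$; taking expectations with $E\lVert P_{\mathcal{B}}e\rVert_{\ell_2}^2=m\sigma^2$ gives $E\lVert d\rVert_{\ell_2}^2\le c_1(\delta_{2k})^2 m\sigma^2/\alpha$. Passing to a single block, for an inactive user I would only use $\lVert d_u\rVert_{\ell_2}^2\le\lVert d\rVert_{\ell_2}^2$, whereas for an active user exchangeability of the $U$ i.i.d. user blocks makes $E\lVert d_u\rVert_{\ell_2}^2$ equal across the $k_2$ active blocks, so $k_2 E\lVert d_u\rVert_{\ell_2}^2\le E\lVert d\rVert_{\ell_2}^2$ and hence $E\lVert d_u\rVert_{\ell_2}^2\le c_1(\delta_{2k})^2 m\sigma^2/(\alpha k_2)$. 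This asymmetry is exactly what produces the extra $1/k_2$ in $P_{md}$ but not in $P_{fa}$.

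For the \emph{false alarm} an inactive user has $h_u=0$, so $\lVert\hbar_u\rVert_{\ell_2}=\lVert d_u\rVert_{\ell_2}$ and the detector fires iff $\lVert d_u\rVert_{\ell_2}^2>\xi$; Markov gives $P_{fa}(\xi)\le E\lVert d_u\rVert_{\ell_2}^2/\xi$, which with the error bound above yields the stated bound.

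For the \emph{missed detection} of an active user I would use the reverse triangle inequality $\lVert\hbar_u\rVert_{\ell_2}\ge\lVert h_u\rVert_{\ell_2}-\lVert d_u\rVert_{\ell_2}$, so that $\{\lVert\hbar_u\rVert_{\ell_2}^2\le\xi\}$ is contained in the union of the deep-fade event $\{\lVert h_u\rVert_{\ell_2}^2\le\xi\}$ and $\{\lVert d_u\rVert_{\ell_2}\ge\lVert h_u\rVert_{\ell_2}-\sqrt{\xi}\}$. A union bound gives the first term $F(\xi):=\Pr(\lVert h_u\rVert_{\ell_2}^2\le\xi)$, the estimator-independent fade contribution. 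For the second I would apply Markov conditionally on $\lVert h_u\rVert_{\ell_2}=r>\sqrt{\xi}$, $\Pr(\lVert d_u\rVert_{\ell_2}\ge r-\sqrt{\xi})\le E\lVert d_u\rVert_{\ell_2}^2/(r-\sqrt{\xi})^2$, and integrate over the fading law of $\lVert h_u\rVert_{\ell_2}$; the resulting weight $E[(\lVert h_u\rVert_{\ell_2}-\sqrt{\xi})^{-2}\mathbf{1}_{\lVert h_u\rVert_{\ell_2}>\sqrt{\xi}}]$ is the constant $c_r(\xi)$ of eqn. \eqref{eqn:c_r}. Combining with the per-user bound above gives $P_{md}(\xi)\le F(\xi)+c_r(\xi)c_1(\delta_{2k})^2 m\sigma^2/(\alpha k_2)$.

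The hard part is the missed-detection step, not the mechanical Markov estimates. Analytically, $c_r(\xi)$ is finite only if the fading density of $\lVert h_u\rVert_{\ell_2}$ does not concentrate too sharply just above $\sqrt{\xi}$, so eqn. \eqref{eqn:c_r} tacitly requires enough regularity there. More seriously, the factorization $E[\lVert d_u\rVert_{\ell_2}^2/(r-\sqrt{\xi})^2]\to c_r(\xi)\,E\lVert d_u\rVert_{\ell_2}^2$ and the $1/k_2$ split pull in opposite directions: conditioning on one active user's own gain $\lVert h_u\rVert_{\ell_2}$ breaks the exchangeability that delivered $1/k_2$, since the error on block $u$ generally depends on $h_u$. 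Making this rigorous requires the modeling assumption that the per-user error magnitude $\lVert d_u\rVert_{\ell_2}$ is (approximately) independent of that user's fading gain — reasonable because $e$ is independent of $h$ and the support is spread across all blocks — which is the single assumption I would isolate and state explicitly before carrying out the integration.
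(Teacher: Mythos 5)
Your proposal is correct and follows essentially the same route as the paper's own proof: reverse triangle inequality plus a union bound splitting off the deep-fade event $F(\xi)$, a conditional Markov bound whose fading-law integral is exactly the constant $c_r(\xi)$ of eqn.~\eqref{eqn:c_r}, the BPDN/RIP guarantee to get $E\lVert d\rVert_{\ell_2}^2\leq c_1(\delta_{2k})^2 m\sigma^2/\alpha$, and the symmetry/exchangeability argument across active users to produce the $1/k_2$ factor. Two remarks: your false-alarm derivation yields $c_1(\delta_{2k})^2 m\sigma^2/(\alpha\xi)$, with $\sigma^2$ in the numerator, which is what Markov's inequality actually gives --- the $\sigma^2$ in the denominator of the theorem statement appears to be a typo --- and the assumption you explicitly isolate (that the per-user error magnitude $\lVert d_u\rVert_{\ell_2}$ is independent of that user's own fading gain, needed to pull $E(\lVert d_u\rVert_{\ell_2}^2\mid\lVert h_u\rVert_{\ell_2})$ out of the integral against $dF$) is precisely the step the paper glosses over with its brief remark that the partial user contributions ``depend only on the marginal distributions.''
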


\begin{proof}
Since the system is symmetric we can consider any user $u$ and drop the
indices $(\cdot)_{u}$ in all user--specific variables. Furthermore, we also
drop $\hat{\cdot}$ used to denote Fourier transforms. Abbreviate the
\emph{dependent} random variables by $x:=\left\Vert h\right\Vert $ and
$y:=\left\Vert d\right\Vert $ and let $F(x)$ be the distribution of $x$. We
have:%
\[
\begin{split}
&  \Pr\{\lVert\hbar\rVert<\xi\}=\Pr\{\lVert\hbar-h+h\rVert<\xi\}\\
&  \leq\Pr\{|\lVert d\rVert-\lVert h\rVert<\xi\}\leq\Pr\{x-\xi<y\}\\
&  =\Pr\{x-\xi<y\}\cap\{x<\xi\}+\Pr\{x-\xi<y\}\cap\{x\geq\xi\}\\
&  \leq\Pr\{x<\xi\}+\Pr\{x-\xi<y\}\cap\{x\geq\xi\}\\
&  =\Pr\{x<\xi\}+\Pr\{y^{2}>(x-\xi)^{2}\}\cap\{x\geq\xi\}\\
&  \leq\int_{0}^{\xi}dF(x)+\int_{\xi}^{\infty}\frac{E(y^{2}|x)}{(x-\xi)^{2}}dF(x)\\
&  \leq F(\xi)+ \frac{c_{r}(\xi)E(\left\Vert d\right\Vert ^{2})}{k_{2}}%
\end{split}
\]
where in the last step we use Markov's inequality and:%
\begin{equation}
c_{r}(\xi):=\int_{\xi}^{\infty}\frac{dF(x)}{(x-\xi)^{2}} \label{eqn:c_r}%
\end{equation}
and the last step follows since the $\ell_{2}$-norm function is the sum of its
squared terms. Hence, its expectation can be calculated as the expectation of
the sum of partial user contributions. These partial user contributions depend
only on the marginal distributions which are equal for all active users from
which the result follows.

The term $E(\lVert d\rVert^{2})$ can be calculated as%
\begin{align*}
E(\lVert d\rVert^{2})  &  =E(\lVert Q_{1}(y/\sqrt{\alpha})-h\rVert^{2})\\
&  \leq\frac{c_{1}(\delta_{2k})^{2}}{\alpha}E(\lVert e\rVert^{2})\\
&  \leq\frac{c_{1}(\delta_{2k})^{2}m\sigma^{2}}{\alpha}%
\end{align*}
The false alarm probability is calculated in a similar manner.
\end{proof}

For the (ergodic) achievable rates $R\left(  \alpha\right):=E\,\log(1+|h|^2)$
per subcarrier with a random channel power $|h|^2$
for a given
pilot/data power split $\alpha$ we can show the following:

\begin{theorem}
Let the channel impulse response be $k$-sparse and use eqn. \eqref{eqn:bpdn}
as the channel estimate. The achievable rate $R(\alpha)$ per subcarrier for a
particular user is lower bounded by:
\begin{align*}
R\left(  \alpha\right)   &  \geq E_{h|\{\left\Vert h\right\Vert >\xi\}}\left[
\log(1+\left(  1-\alpha\right)  |h|^{2}\sigma^{-2})\right]  \left(
1-P_{md}\right) \\
&  -\log\left(  1+\frac{\left(  1-\alpha\right)  c_{1}(\delta_{2k})^{2}%
m}{\alpha n}\right)
\end{align*}
for a fixed sampling $\Phi$ obeying a RIP-constant $\delta_{2k}<\sqrt{2}-1$.
\label{thm:2}
\end{theorem}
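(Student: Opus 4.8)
The plan is to treat each data subcarrier in eqn.~\eqref{eqn:subc} as a scalar channel with imperfect channel knowledge and to invoke the standard worst-case-noise lower bound on mutual information. On a data subcarrier the receiver knows the estimate $\hat{\hbar}$ and sees useful signal of power $S=(1-\alpha)|h|^2$ (with $|h|^2$ the per-subcarrier channel power and $1-\alpha$ the data power), estimation-error interference $I$ of mean $(1-\alpha)E|\hat{d}_k|^2$, and additive noise of power $N=\sigma^2$. Since the estimation error is zero-mean and independent of the transmitted symbol, the interference term $\hat{d}_k\hat{x}_k$ is uncorrelated with the useful signal, so replacing it together with $\hat{e}_k$ by an independent circularly-symmetric Gaussian of matched variance can only decrease the mutual information (the Hassibi--Hochwald/M\'edard worst-case-noise argument). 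This yields the per-subcarrier lower bound $\log(1+S/(I+N))$.

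The main algebraic step is the elementary inequality $\log(1+\tfrac{S}{I+N})\geq\log(1+\tfrac{S}{N})-\log(1+\tfrac{I}{N})$, valid for $S,I,N\ge0$: after clearing the common denominator $I+N$ it reduces to $I+N+S\geq N+S$, i.e.\ to $I\ge0$. This splits the per-subcarrier rate into an \emph{ideal} term $\log(1+(1-\alpha)|h|^2\sigma^{-2})$ and a \emph{penalty} term $\log(1+I/N)$ that is charged entirely to the channel estimation error.

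Next I would account for detection. By assumption (ii) the data is decoded only when the user is declared active, i.e.\ when $\|\hbar\|>\xi$, and is discarded otherwise, so the ideal term carries the detection indicator. Taking the fading expectation and writing $E[\mathbb{1}\{\|h\|>\xi\}(\cdots)]=E_{h|\{\|h\|>\xi\}}[\cdots]\,\Pr(\|h\|>\xi)$, then identifying $\Pr(\|h\|>\xi)=1-P_{md}$ from the detection rule, reproduces the first term of the bound. For the penalty I would drop the (nonnegative) indicator, apply Jensen's inequality to pull the expectation inside the concave $\log$, and use Parseval $\sum_k E|\hat{d}_k|^2=E\|d\|^2$ together with the fact that the data band has $\approx n$ subcarriers (since $m\ll n$), so that the averaged per-subcarrier interference is $E(I)=(1-\alpha)E\|d\|^2/n$. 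Substituting $E\|d\|^2\leq c_1(\delta_{2k})^2 m\sigma^2/\alpha$ from the proof of Theorem~1 (which is where the RIP hypothesis $\delta_{2k}<\sqrt{2}-1$ enters, guaranteeing the $Q_1$-estimator bound) gives exactly the subtracted term $\log(1+(1-\alpha)c_1(\delta_{2k})^2 m/(\alpha n))$.

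The step I expect to be most delicate is the worst-case-noise justification: it requires that the estimation-error interference be uncorrelated with the signal and that the decoder mismatch (using $\hbar$ while the true gain is $h$) be absorbed cleanly, so that the signal power entering the first term is the \emph{true} $|h|^2$ rather than $|\hbar|^2$. A second subtlety is the coupling between the detection event $\{\|\hbar\|>\xi\}$ and the conditioning $\{\|h\|>\xi\}$ on the true norm; the clean factorization above relies on the approximation $\hbar\approx h$ for the detection event, which must be quantified or assumed, consistent with the fast-fading-averaging modeling assumption stated before Theorem~1.
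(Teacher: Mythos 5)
There is a genuine gap, and it sits exactly at the step you yourself flag as ``most delicate.'' The worst-case-noise (M\'edard/Caire-type) bound gives a per-subcarrier rate of the form $\log\bigl(1+S/(I+N)\bigr)$ with the signal power $S$ built from the channel \emph{estimate}, not from the true gain: the receiver knows $\hbar$, so $S\propto|\hbar|^2$ (more precisely $|\hbar+E(d|\hbar)|^2$, see below). Your elementary split $\log\bigl(1+\tfrac{S}{I+N}\bigr)\geq\log\bigl(1+\tfrac{S}{N}\bigr)-\log\bigl(1+\tfrac{I}{N}\bigr)$ is correct as an inequality, but after it the ideal term contains $|\hbar|^2$, and there is no valid inequality of the form $E\log(1+c|\hbar|^2)\geq E\log(1+c|h|^2)$ (an $\ell_1$-regularized estimator such as BPDN shrinks, so this typically fails). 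Plugging the true $|h|^2$ into $S$ from the start, as you do, is therefore unjustified. The paper closes this hole differently: it uses the \emph{exact identity} $\log\bigl(1+\tfrac{A}{B+\sigma^2}\bigr)=\log\bigl(1+\tfrac{A+B}{\sigma^2}\bigr)-\log\bigl(1+\tfrac{B}{\sigma^2}\bigr)$, which keeps the interference power $B$ inside the ideal term, and then invokes Lemma~\ref{lemma:caire_general}: since $E(|h|^2|\hbar)=|\hbar+E(d|\hbar)|^2+E(|\mathfrak{d}|^2|\hbar)=A'+B'$ (cross terms vanish by conditional centering), Jensen plus the tower property give $E\log(1+|h|^2)\leq E\log(1+A'+B')$. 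In other words, the true channel power is dominated by signal power \emph{plus} error power, never by signal power alone; your split discards $B$ from the first term and so cannot recover $|h|^2$.

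A second, related gap: you justify the worst-case-noise step by asserting that the estimation error is ``zero-mean and independent of the transmitted symbol,'' hence uncorrelated with the useful signal. Independence from the symbol is fine, but BPDN is not an MMSE estimator, so $E(d|\hbar)\neq 0$; conditioned on $\hbar$ the interference $dx$ \emph{is} correlated with the useful signal $\hbar x$ (the cross-correlation is $\hbar^{*}E(|x|^2)E(d|\hbar)$). The paper's proof handles this by recentering: it writes $y=(\hbar+E(d|\hbar)+\mathfrak{d})x+e$ with $\mathfrak{d}=d-E(d|\hbar)$, so that the effective channel is $\hbar+E(d|\hbar)$ and the effective error $\mathfrak{d}$ satisfies $E(\mathfrak{d}|\hbar)=0$, restoring the orthogonality the worst-case argument needs. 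Your remaining steps (CS bound $E\lVert d\rVert^2\leq c_1(\delta_{2k})^2 m\sigma^2/\alpha$, Parseval spreading over $n$ subcarriers, and the $(1-P_{md})$ factor via the detection indicator) do match the paper's, but without the recentering and without Lemma~\ref{lemma:caire_general} the proposal does not establish the stated bound.
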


To prove this theorem we need the following lemma:

\begin{lemma}
\label{lemma:caire_general} Suppose $h,\hbar$ are random variables and $\hbar$
is an estimate of $h$ (hence they are correlated). Denote the error as
$d=h-\hbar$ and its unbiased version $\mathfrak{d}=d-E(d|\hbar)$. Then we
have:
\[
E\log\left(  1+|h|^{2}\right)  \leq E\log\left(  1+|\hbar+E(d|\hbar
)|^{2}+E(|\mathfrak{d}|^{2}|\hbar)\right)
\]

\end{lemma}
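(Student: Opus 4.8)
The plan is to reduce everything to a conditional Jensen inequality, conditioning on the estimate $\hbar$ throughout and closing with the tower property. First I would re-express $h$ in terms of the quantities on the right-hand side. Since $d=h-\hbar$ and $\mathfrak{d}=d-E(d|\hbar)$, one has
\[
h=\hbar+E(d|\hbar)+\mathfrak{d}.
\]
Setting $\mu:=\hbar+E(d|\hbar)=E(h|\hbar)$, this splits $h$ into an $\hbar$-measurable conditional mean $\mu$ and a conditionally centered residual $\mathfrak{d}$ satisfying $E(\mathfrak{d}|\hbar)=0$.

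Next I would compute the conditional second moment. Expanding $|h|^{2}=|\mu|^{2}+\mu\mathfrak{d}^{\ast}+\mu^{\ast}\mathfrak{d}+|\mathfrak{d}|^{2}$ and taking $E(\cdot|\hbar)$, the two cross terms drop out because $\mu$ is $\hbar$-measurable (so it factors out of the conditional expectation) while $E(\mathfrak{d}|\hbar)=0$; hence
\[
E\bigl(|h|^{2}\,\big|\,\hbar\bigr)=|\mu|^{2}+E\bigl(|\mathfrak{d}|^{2}\,\big|\,\hbar\bigr),
\]
which is exactly the argument inside the logarithm on the right.

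Then I would invoke concavity of $t\mapsto\log(1+t)$ on $[0,\infty)$ to apply Jensen's inequality inside the conditional expectation, giving
\[
E\bigl(\log(1+|h|^{2})\,\big|\,\hbar\bigr)\leq\log\bigl(1+E(|h|^{2}\,|\,\hbar)\bigr)=\log\bigl(1+|\hbar+E(d|\hbar)|^{2}+E(|\mathfrak{d}|^{2}\,|\,\hbar)\bigr),
\]
and finally take the outer expectation over $\hbar$, using $E\log(1+|h|^{2})=E\,E\bigl(\log(1+|h|^{2})\,\big|\,\hbar\bigr)$, to obtain the claim.

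I expect no real obstacle here: this is a clean conditional Jensen bound in the spirit of treating the estimation error as additional noise. The only point that needs care is the complex bookkeeping of the cross terms $\mu\mathfrak{d}^{\ast}+\mu^{\ast}\mathfrak{d}$, whose conditional expectation must be verified to vanish by pulling the $\hbar$-measurable factor $\mu$ out of $E(\cdot|\hbar)$ and using $E(\mathfrak{d}|\hbar)=0$. It is also worth noting that $|h|^{2}$ is interpreted here as the scalar (per-subcarrier) channel power appearing in $R(\alpha)=E\log(1+|h|^{2})$ of Theorem \ref{thm:2}, so the modulus-squared quantities are real and nonnegative and Jensen applies directly.
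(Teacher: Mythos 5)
Your proposal is correct and follows essentially the same route as the paper: both rest on the decomposition $h=\hbar+E(d|\hbar)+\mathfrak{d}$ with the cross terms vanishing conditionally on $\hbar$, followed by the conditional Jensen inequality for the concave map $t\mapsto\log(1+t)$ and the tower property. The only difference is presentational --- you work forward from $E\log(1+|h|^{2})$ while the paper rewrites the right-hand side as $E\log\left(1+E(|h|^{2}|\hbar)\right)$ before invoking Jensen --- and your version makes the vanishing of the cross terms explicit, which the paper leaves implicit.
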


\begin{proof}
We have:
\begin{align*}
&  E\log\left(  1+|\hbar+E(d|\hbar)|^{2}+E(|\mathfrak{d}|^{2}|\hbar)\right) \\
&  =E\log\left(  1+E(|\hbar+E(d|\hbar)|^{2})|\hbar)+E(|\mathfrak{d}|^{2}%
|\hbar)\right) \\
&  =E\log\left(  1+E(|\hbar+E(d|\hbar)+\mathfrak{d})|^{2}|\hbar)\right) \\
&  =E\log\left(  1+E(|h|^{2}|\hbar)\right) \\
&  \geq E\log\left(  1+|h|^{2}\right)
\end{align*}
where the last line is due to Jenssen's inequality.
\end{proof}

\begin{corollary}
Note that Lemma \ref{lemma:caire_general} recovers Lemma 2 in \cite[Theorem
1]{Caire2010} by the fact that if $\hbar$ is the minimum mean squared estimate
(MMSE) of $h$ then $E(d|\hbar)=0$ so that $d=\mathfrak{d}$ and $\hbar$ and $d$
are actually independent. Hence $E(|\mathfrak{d}|^{2}|\hbar)=E(|\mathfrak{d}%
|^{2})=\alpha$ and $E(|\hbar|^{2})=1-\alpha$ so that:%
\begin{align*}
&  E\log\left(  1+|h|^{2}\right) \\
&  \leq E\log\left(  1+|\hbar+E(d|\hbar)|^{2}+E(|\mathfrak{d}|^{2}%
|\hbar)\right) \\
&  =E\log\left(  1+|\hbar|^{2}+E(|\mathfrak{d}|^{2}|\hbar)\right) \\
&  =E\log\left(  1+\left(  1-\alpha\right)  |h|^{2}+\alpha\right)
\end{align*}
because $\hbar$ and $h$ have the same distribution.
\end{corollary}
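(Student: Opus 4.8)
The plan is to specialize Lemma \ref{lemma:caire_general} to the jointly Gaussian training model in which $\hbar$ is the conditional-mean (MMSE) estimate of $h$, and then read off the three structural facts that collapse the general bound of that lemma into the stated closed form. First I would invoke the \emph{orthogonality principle} of MMSE estimation: writing $\hbar=E(h\mid y)$ for the underlying observation $y$, the error $d=h-\hbar$ satisfies $E(d\mid y)=0$, and since $\hbar$ is a measurable function of $y$ the tower property gives $E(d\mid\hbar)=E\big(E(d\mid y)\mid\hbar\big)=0$. This immediately yields $\mathfrak{d}=d-E(d\mid\hbar)=d$ and $|\hbar+E(d\mid\hbar)|^{2}=|\hbar|^{2}$, so the bias term is removed from the right-hand side of the lemma.

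The second ingredient, and the step I expect to be the crux, is to show that the conditional error power $E(|\mathfrak{d}|^{2}\mid\hbar)$ is in fact a constant. Here the Gaussian-codebook assumption is essential: when $h$ and $y$ are jointly Gaussian the MMSE estimate $\hbar$ is a linear functional of $y$, the error $d$ is Gaussian, and orthogonality ($E(d\,\hbar^{\ast})=0$) together with joint Gaussianity upgrades ``uncorrelated'' to ``independent.'' Consequently $E(|\mathfrak{d}|^{2}\mid\hbar)=E(|\mathfrak{d}|^{2})=E(|d|^{2})$, i.e. the conditional variance is homoscedastic and does not depend on the realization of $\hbar$. I would then fix the power bookkeeping through the training-based normalization of \cite{Caire2010}: with the unit power split into a pilot fraction $\alpha$ and a data fraction $1-\alpha$, the error carries $E(|\mathfrak{d}|^{2})=\alpha$ while the estimate absorbs the complementary power $E(|\hbar|^{2})=1-\alpha$; consistency is confirmed by $E|h|^{2}=E|\hbar|^{2}+E|d|^{2}=1$.

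With these facts in hand the substitution into Lemma \ref{lemma:caire_general} is mechanical, giving
\[
E\log\big(1+|h|^{2}\big)\leq E\log\big(1+|\hbar|^{2}+\alpha\big),
\]
and the only remaining point is to rewrite $|\hbar|^{2}$ in terms of $|h|^{2}$. Since $h$ and $\hbar$ are zero-mean complex Gaussian with $E|h|^{2}=1$ and $E|\hbar|^{2}=1-\alpha$, the rescaled estimate $\hbar/\sqrt{1-\alpha}$ is distributed identically to $h$, so $|\hbar|^{2}$ and $(1-\alpha)|h|^{2}$ share the same law and hence the same expectation under $\log(1+\cdot)$, which delivers $E\log(1+(1-\alpha)|h|^{2}+\alpha)$ and matches Lemma~2 of \cite{Caire2010}. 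The main obstacle is conceptual rather than computational: the collapse of the general (biased, heteroscedastic) bound to this constant-penalty expression is \emph{specific} to the Gaussian case, and Gaussianity is indispensable in exactly two places, namely the independence of $\hbar$ and $d$ (equivalently the constancy of the conditional error variance) and the final distributional identity $\hbar\stackrel{d}{=}\sqrt{1-\alpha}\,h$; in a non-Gaussian model neither would hold and the bound of Lemma \ref{lemma:caire_general} would retain an $\hbar$-dependent penalty.
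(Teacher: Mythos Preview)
Your proposal is correct and follows essentially the same approach as the paper: invoke the MMSE orthogonality principle to kill the bias term, use Gaussianity to turn orthogonality into independence so that the conditional error variance is constant, and then substitute into Lemma~\ref{lemma:caire_general}. Your treatment of the final distributional step is in fact more precise than the paper's, which asserts that ``$\hbar$ and $h$ have the same distribution'' whereas (as you note) the correct statement is $\hbar\stackrel{d}{=}\sqrt{1-\alpha}\,h$, i.e.\ $|\hbar|^{2}\stackrel{d}{=}(1-\alpha)|h|^{2}$.
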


\begin{proof}
[proof of Theorem \ref{thm:2}] We have for each subcarrier and user (recall
that we also dropped user indices $(\cdot)_{u}$ and the $\hat{\cdot}$-notation for
the Fourier transform):
\[
y=(\hbar+d)x+e
\]
where $E(|x|^{2})=1-\alpha$ and $E(|e|^{2})=\sigma^{2}$. We can assume that
$x$ is independent of the tuple $(\hbar,d)$ but $\hbar$ and $d$ are in fact
not (they were if we had MMSE estimation). Linking the achievable mutual
information $I$ to MSE estimation as \cite[Theorem 1]{Caire2010}%
\begin{align*}
I(x;y|\hbar) &  =H(x|\hbar)-H(x|\hbar,y)\\
&  \geq H(x|\hbar)-E\left(  \log[\pi eE(\left\vert x-\alpha_{1}y\right\vert
^{2}|\hbar)]\right)
\end{align*}
where real $\alpha$ is some free parameter. The optimal $\alpha_{1}y$ is the
MMSE of $x$ so that in general by the orthogonality principle:
\[
\bar{\alpha}_{1}=E(yx^{\ast}|\hbar)/E(|y|^{2}|\hbar)
\]
In the following we use $\bar{\alpha}_{1}$ for bounding the rate but without
explizit calculating the term. Before we proceed let us rewrite
\begin{align*}
y &  =(\hbar+E(d|\hbar)+d-E(d|\hbar))x+e\\
&  =(\hbar+E(d|\hbar)+\mathfrak{d}))x+e
\end{align*}
where
\[
\mathfrak{d}=d-E(d|\hbar)
\]
Hence, we have $E(\mathfrak{d}|\hbar)=0$ so that $\hbar+E(d|\hbar)$ and
$\mathfrak{d}$ are in fact orthogonal which is the desired goal. Using optimal
$\bar{\alpha}_{1}$ we have after some tedious algebra:%
\begin{align*}
&  R(\alpha)\\
&  \geq E\log\left(  \frac{E(|x|^{2})|\hbar+E(d|\hbar)|^{2}+E(|x|^{2}%
)E(|\mathfrak{d}|^{2}|\hbar)+\sigma^{2}}{E(|x|^{2})E(|\mathfrak{d}|^{2}%
|\hbar)+\sigma^{2}}\right)  \\
&  =E\log\left(  1+\frac{E(|x|^{2})|\hbar+E(d|\hbar)|^{2}}{E(|x|^{2}%
)E(|\mathfrak{d}|^{2}|\hbar)+\sigma^{2}}\right)  \\
&  =E\log\left(  1+\frac{(1-\alpha)|\hbar+E(d|\hbar)|^{2}}{(1-\alpha
)E(|\mathfrak{d}|^{2}|\hbar)+\sigma^{2}}\right)
\end{align*}
The capacity loss per subchannel in eqn. (\ref{eqn:subc}) can be upperbounded
by
\begin{align*}
&  \Delta R\left(  \alpha\right)  \\
&  \leq E\log\left(  1+\frac{(1-\alpha)h}{\sigma^{2}}\right)  \\ & -E\log\left(
1+\frac{(1-\alpha)|\hbar+E(d|\hbar)|^{2}}{(1-\alpha)E(|\mathfrak{d}|^{2}%
|\hbar)+\sigma^{2}}\right)  \\
&  =E\log\left(  1+\frac{(1-\alpha)h}{\sigma^{2}}\right)  \\
&  -E\log\left(  1+\frac{(1-\alpha)|\hbar+E(d|\hbar)|^{2}+(1-\alpha
)E(|\mathfrak{d}|^{2}|\hbar)}{\sigma^{2}}\right)  \\
&  +E\log\left(  1+\frac{(1-\alpha)E(|\mathfrak{d}|^{2}|\hbar)}{\sigma^{2}%
}\right)
\end{align*}
which by virtue of Lemma \ref{lemma:caire_general} gives:
\[
\Delta R\left(  \alpha\right)  \leq E\log\left(  1+\frac{(1-\alpha
)E(|\mathfrak{d}|^{2}|\hbar)}{\sigma^{2}}\right)
\]
Clearly, we have $E(|\mathfrak{d}|^{2}|\hbar)\leq E(|d|^{2}|\hbar)$ and get so
the final result using:
\[
E(|\mathfrak{d}|^{2}|\hbar)\leq2E(|d|^{2}|\hbar)\leq\frac{c_{1}(\delta
_{2k})^{2}m\sigma^{2}}{\alpha n}%
\]
We can finally incorporate the missed detection properly.
\end{proof}

We also have the following upper bound.

\begin{theorem}
Let the channel impulse response be $k$-sparse and use eqn. (\ref{eqn:bpdn})
as the channel estimate. The achievable rate per subcarrier is upper bounded
by:
\[
R\left(  \alpha\right)  \leq E_{h}\log\left(  1+\frac{\left(  1-\alpha\right)
h\sigma^{-2}}{1+\frac{c_{1}(\delta_{2k})^{2}m}{n\alpha}}\right)
\]
for a fixed sampling $\Phi$ obeying a RIP-constant $\delta_{2k}<\sqrt{2}-1$.
\end{theorem}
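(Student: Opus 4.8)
The plan is to mirror the proof of Theorem~\ref{thm:2}, but to dualise every step: where the lower bound replaced the non-Gaussian estimation residual by a worst-case Gaussian of the same variance (which can only shrink a rate), the upper bound will instead invoke the maximum-entropy property of the Gaussian to cap the output entropy. Concretely, starting from the per-subcarrier model $y=(\hbar+d)x+e$ with $E(|x|^{2})=1-\alpha$ and $E(|e|^{2})=\sigma^{2}$, I would write $R(\alpha)=I(x;y|\hbar)=h(y|\hbar)-h(y|x,\hbar)$ and bound the first term by the Gaussian with matched conditional variance, $h(y|\hbar)\le E_{\hbar}\log(\pi e\,E(|y|^{2}|\hbar))$, while lower-bounding the conditional (noise) entropy $h(y|x,\hbar)$ by the entropy of the additive residual $dx+e$. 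This is exactly the inequality direction complementary to the one used for the lower bound.

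Next I would carry out the same orthogonal decomposition $d=E(d|\hbar)+\mathfrak{d}$ employed in the proof of Theorem~\ref{thm:2}, so that the bias $E(d|\hbar)$ folds into an effective channel $\hbar+E(d|\hbar)$ and $\mathfrak{d}$ is conditionally zero-mean and orthogonal to it. The output power then splits as $E(|y|^{2}|\hbar)=(1-\alpha)\,|\hbar+E(d|\hbar)|^{2}+(1-\alpha)E(|\mathfrak{d}|^{2}|\hbar)+\sigma^{2}$, and the residual-plus-noise $(1-\alpha)E(|\mathfrak{d}|^{2}|\hbar)+\sigma^{2}$ plays the role of an effective noise floor. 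The key reduction is to apply Lemma~\ref{lemma:caire_general}, which lets me trade the estimate-based quantity $|\hbar+E(d|\hbar)|^{2}+E(|\mathfrak{d}|^{2}|\hbar)$ for the true channel power $|h|^{2}$, thereby pulling $E_{h}$ outside and placing $|h|^{2}$ inside the logarithm in the form demanded by the statement. Finally I would substitute the RIP-based per-subcarrier estimate $E(|d|^{2})\le c_{1}(\delta_{2k})^{2}m\sigma^{2}/(\alpha n)$ from the proof of Theorem~\ref{thm:2}, use $E(|\mathfrak{d}|^{2})\le E(|d|^{2})$ to absorb the residual into the noise, and normalise by $\sigma^{2}$; collecting terms turns the effective noise floor into the single factor $1+c_{1}(\delta_{2k})^{2}m/(n\alpha)$ appearing in the denominator. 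The identity in distribution $\hbar\sim h$ (used already in the Corollary) is what finally licenses writing the outer expectation as $E_{h}$.

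The hard part will be the two places where the estimator fails to be MMSE: because $\hbar$ and $d$ are genuinely dependent, $\mathfrak{d}$ is only conditionally centred and certainly not Gaussian, so neither the max-entropy cap on $h(y|\hbar)$ nor the matched lower bound on the noise entropy $h(y|x,\hbar)$ is tight, and Lemma~\ref{lemma:caire_general} must be applied conditionally on $\hbar$ rather than to independent quantities. The whole argument therefore produces an upper bound only up to constant slack, and the delicate accounting is to show that all of this slack can be charged to the single RIP constant $c_{1}(\delta_{2k})^{2}$ so that the clean denominator $1+c_{1}(\delta_{2k})^{2}m/(n\alpha)$ survives; this is the same mechanism by which the factor-of-two step $E(|\mathfrak{d}|^{2})\le 2E(|d|^{2})$ was hidden inside $c_{1}$ in the proof of Theorem~\ref{thm:2}, and I would expect to lean on the constraint $\delta_{2k}<\sqrt{2}-1$ precisely to keep $c_{1}(\delta_{2k})$ finite and the bound meaningful.
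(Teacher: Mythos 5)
Your proposal takes a genuinely different route from the paper: the paper's entire proof of this theorem is a citation to \cite{Lapidoth2002_IT}, whereas you attempt a self-contained dualization of the proof of Theorem~\ref{thm:2}. Unfortunately the dualization breaks at three points, each a direction-of-inequality failure. First, to make the error variance appear in the \emph{denominator} you must lower-bound the conditional noise entropy by its Gaussian-matched value, $h(y|x,\hbar)\ge E\log\bigl(\pi e(\sigma^{2}+|x|^{2}E(|\mathfrak{d}|^{2}|\hbar))\bigr)$; but the Gaussian is the \emph{maximum}-entropy distribution at fixed variance, so this inequality holds only if $\mathfrak{d}$ conditioned on $(x,\hbar)$ is itself Gaussian. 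The BPDN error $d=\hat{\hbar}-h$ is emphatically not Gaussian given $\hbar$, and the only generic lower bound available is $h(y|x,\hbar)\ge h(e)=\log(\pi e\sigma^{2})$, which destroys the term $c_{1}(\delta_{2k})^{2}m/(n\alpha)$ in the denominator and leaves a perfect-CSI-type bound. This slack cannot be ``charged to $c_{1}$'': $c_{1}(\delta_{2k})$ is a fixed explicit constant from the RIP recovery guarantee, not a tunable constant, and the entropy deficit of a non-Gaussian residual relative to the Gaussian cap is not controlled by any constant factor.

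Second, Lemma~\ref{lemma:caire_general} points the wrong way for your purposes. It states $E\log(1+|h|^{2})\le E\log\bigl(1+|\hbar+E(d|\hbar)|^{2}+E(|\mathfrak{d}|^{2}|\hbar)\bigr)$, i.e.\ true-channel $\le$ estimate-based. After your max-entropy step you hold an upper bound in terms of the estimate-based quantity and need to replace it by $|h|^{2}$, which requires exactly the reverse inequality; the lemma is usable in the lower-bound proof of Theorem~\ref{thm:2} precisely because there the estimate-based expression enters with the opposite sign (as a subtracted term in the capacity loss). Third, even granting the first two steps, substituting the RIP estimate $E(|\mathfrak{d}|^{2}|\hbar)\le c_{1}(\delta_{2k})^{2}m\sigma^{2}/(\alpha n)$ \emph{into a denominator} makes the right-hand side smaller, so this substitution tightens rather than relaxes the bound and is illegitimate in an upper-bound proof (it is fine in Theorem~\ref{thm:2}, where the error term enters as a loss). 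The paper sidesteps all of this by quoting the result from \cite{Lapidoth2002_IT}, where bounds of this form are established for Gaussian codebooks under hypotheses that supply exactly the Gaussianity your first step needs.
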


\begin{proof}
The prove follows from \cite{Lapidoth2002_IT}.
\end{proof}

\section{Simulations}

An LTE-A 4G frame consists of a number of subframes with 20MHz bandwidth; the
first subframe contains the RACH with one "big" OFDM symbol of 839 dimensions
located around the frequency center of the subframe. The FFT size is 24578=24k
corresponding to the 20MHz bandwidth whereby the remainder bandwidth outside
PRACH is used for scheduled transmission in LTE-A, so-called PUSCH. The prefix
of the OFDM symbol accommodates delays up to 100$\mu s$ (or 30km cell radius)
which equals 3000 dimensions. In the standard the RACH is responsible for user
aquisition by correlating the received signal with preambles from a given set.
Here, to mimic a 5G situation, we equip the transmitter with the capability of
sending information in "one shot", i.e., in addition to user aquisition,
channel estimation is performed and the data is detected. For this a fraction
of the PUSCH is reserved for data packets of users which are detected in the
PRACH. Please note the rather challenging scenario of only 839 subcarrier in
the measurement window versus almost 24k data payload subcarriers.

In our setting, a limited number of users is detected out of a maximum set
(here 10 out of 100). We assume that the delay spread is below 300 dimensions
of which only a set of 6 pathes are actually relevant. The pilot signalling is similar to
\cite{Wunder2014_ICC} but modifed to fit the data/pilot separation. Each active user sends
1000 bits in some predefined frequency slot. This is uniquely achieved by
mapping the sequences to a slot. Hence, in the classical Shannon setting 100
users x (300 pathes + 1000 bits) = 130k dimensions are needed while there are
only 24k available! The performance results are depicted in
Fig. \ref{fig:crach:ser} where we show show symbol error rates (SER) over the
pilot-to-data power ratio $\alpha$. Moreover, in Fig. \ref{fig:crach:pfd} we
depict false detection probability $P_{FD}$ (some user is detected while not
active) over missed detection probability $P_{MD}$ (user is active while not
detected). We observe that, although the algorithms might not yet capture the
full potential of this idea, reasonable detection performance can be achieved
by varying $\alpha$. In the 4G LTE-A standard a minimum $P_{FD}=10^{-3}$ is
required for any number of receive antennas, for all frame structures and for
any channel bandwidth. For certain SNRs a minimum $P_{MD}=10^{-2}$ is
required. It can be observed from the simulations that the requirements can be
achieved. Actually, compared to 4G LTE-A where the control signalling can be
up to 2000\% \cite{Wunder2014_COMMAG} of a single resource element the control
overhead is in the CS setting down to to 13\% (let alone the huge increase in
latency) in the best case.

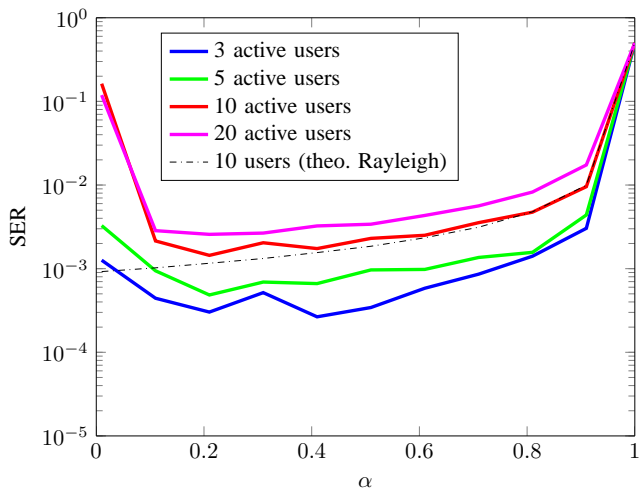
\begin{figure}[t]
\centering
\definecolor{mycolor1}{rgb}{1.00000,0.00000,1.00000}%
\begin{tikzpicture}[scale=.85]

\begin{axis}[%
width=0.951\linewidth,
height=0.739\linewidth,
at={(0\linewidth,0\linewidth)},
scale only axis,
separate axis lines,
every outer x axis line/.append style={black},
every x tick label/.append style={font=\color{black}},
xmin=0,
xmax=1,
xlabel={$\alpha$},
every outer y axis line/.append style={black},
every y tick label/.append style={font=\color{black}},
ymode=log,
ymin=1e-05,
ymax=1,
yminorticks=true,
ylabel={SER},
axis background/.style={fill=white},
legend style={at={(0.12,0.61)},anchor=south west,legend cell align=left,align=left,fill=none,draw=black}
]
\addplot [color=blue,solid,line width=1.5pt]
  table[row sep=crcr]{%
0.01	0.00126256630627449\\
0.11	0.00044303383500448\\
0.21	0.000302637861455043\\
0.31	0.000516845554618056\\
0.41	0.000265633891254439\\
0.51	0.000343181981699141\\
0.61	0.000583168180837078\\
0.71	0.000859716853403449\\
0.81	0.00140950297274493\\
0.91	0.00303708899920858\\
1	0.498041176932284\\
};
\addlegendentry{3 active users};

\addplot [color=green,solid,line width=1.5pt]
  table[row sep=crcr]{%
0.01	0.00327307147669224\\
0.11	0.000942970880835116\\
0.21	0.000484672703726233\\
0.31	0.000691702803995055\\
0.41	0.000662993250004905\\
0.51	0.00096603172890136\\
0.61	0.000979296254145164\\
0.71	0.0013612277534682\\
0.81	0.0015669431744599\\
0.91	0.00438431803465259\\
1	0.500378117457764\\
};
\addlegendentry{5 active users};

\addplot [color=red,solid,line width=1.5pt]
  table[row sep=crcr]{%
0.01	0.162856271652605\\
0.11	0.00214429368147261\\
0.21	0.0014421791844628\\
0.31	0.00204036534130426\\
0.41	0.00173815971063975\\
0.51	0.00230231791255966\\
0.61	0.00250683447904471\\
0.71	0.00354922852526997\\
0.81	0.0047368799150582\\
0.91	0.00955871212121212\\
1	0.500524364626189\\
};
\addlegendentry{10 active users};

\addplot [color=mycolor1,solid,line width=1.5pt]
  table[row sep=crcr]{%
0.01	0.118943622374555\\
0.11	0.00284773190034843\\
0.21	0.0025717851094892\\
0.31	0.00266344961125937\\
0.41	0.00324302167751394\\
0.51	0.00340034890779122\\
0.61	0.0043270061960958\\
0.71	0.00563269283362522\\
0.81	0.0082474562930406\\
0.91	0.0173883783471048\\
1	0.499858851822\\
};
\addlegendentry{20 active users};

\addplot [color=black,dashdotted]
  table[row sep=crcr]{%
0.01	0.000923194948055817\\
0.11	0.00102272521043062\\
0.21	0.00116047959743498\\
0.31	0.00132304074491651\\
0.41	0.00155682077477703\\
0.51	0.00185322203098892\\
0.61	0.00234067111650649\\
0.71	0.00313174194111054\\
0.81	0.00478103578180689\\
0.91	0.00995635573568282\\
1	0.499999999999995\\
};
\addlegendentry{10 users (theo. Rayleigh)};

\end{axis}
\end{tikzpicture}
\caption{Averaged BPSK SER in 5G \textquotedblright one-shot\textquotedblright\ random access in a $20$MHz LTE-A standard setting at (overall) SNR=$20$dB. $m=839$ out of
$n=24576$ dimensions are used for CS and sparsity of the channel is $k=6$. The
total number of users is 100, out of which 3-20 are active. The control
overhead is below $13$\%}%
\label{fig:crach:ser}%
\end{figure}

\begin{figure}[t]
\centering
\begin{tikzpicture}[scale=.85]

\begin{axis}[%
width=0.951\linewidth,
height=0.739\linewidth,
at={(0\linewidth,0\linewidth)},
scale only axis,
separate axis lines,
every outer x axis line/.append style={black},
every x tick label/.append style={font=\color{black}},
xmode=log,
xmin=0.0001,
xmax=0.1,
xminorticks=true,
xlabel={$P_{MD}$},
every outer y axis line/.append style={black},
every y tick label/.append style={font=\color{black}},
ymode=log,
ymin=1e-05,
ymax=1,
yminorticks=true,
ylabel={$P_{FD}$},
axis background/.style={fill=white},
legend style={at={(0.03,0.97)},anchor=north west,legend cell align=left,align=left,draw=black}
]
\addplot [color=blue,solid,line width=1.5pt,mark size=2.0pt,mark=o,mark options={solid}]
  table[row sep=crcr]{%
0.002	0.530444444444444\\
0	0.441333333333333\\
0.012	0.338222222222222\\
0.002	0.266888888888889\\
0.004	0.208444444444444\\
0.00600000000000012	0.147111111111111\\
0.0160000000000002	0.111777777777778\\
0.014	0.0851111111111116\\
0.012	0.0622222222222227\\
0.0140000000000002	0.0444444444444448\\
0.0460000000000004	0.00311111111111151\\
};
\addlegendentry{$\alpha$= 0.3};

\addplot [color=green,solid,line width=1.5pt,mark size=2.0pt,mark=o,mark options={solid}]
  table[row sep=crcr]{%
0	0.397333333333333\\
0	0.226222222222222\\
0.002	0.107111111111111\\
0	0.0593333333333339\\
0.002	0.0360000000000005\\
0.002	0.016444444444445\\
0.0100000000000001	0.0102222222222226\\
0.01	0.00488888888888928\\
0.00800000000000001	0.00222222222222257\\
0.0140000000000002	0.000666666666666815\\
0.0400000000000005	0\\
};
\addlegendentry{$\alpha$= 0.5};

\addplot [color=red,solid,line width=1.5pt,mark size=2.0pt,mark=o,mark options={solid}]
  table[row sep=crcr]{%
0	0.173777777777778\\
0	0.0411111111111118\\
0	0.00688888888888928\\
0	0.00177777777777788\\
0.002	0.000222222222222235\\
0.004	0.000222222222222235\\
0.00800000000000012	0\\
0.00800000000000001	0\\
0.012	0\\
0.0160000000000002	0\\
0.0360000000000003	0\\
};
\addlegendentry{$\alpha$= 0.7};


\draw[solid, line width=1.0pt] (axis cs:0.0001,1e-05) rectangle (axis cs:0.0101,0.00101);
\node[right, align=left, text=black] at (axis cs:0.0002,0.0001) {LTE standard};
\end{axis}
\end{tikzpicture}%
\caption{$P_{FD}$, $P_{MD}$ over $\xi$ and fixed $\alpha$ (such SER$< 10^{-3}$) for the 5G \textquotedblright one-shot\textquotedblright\ random access}
\label{fig:crach:pfd}%
\end{figure}
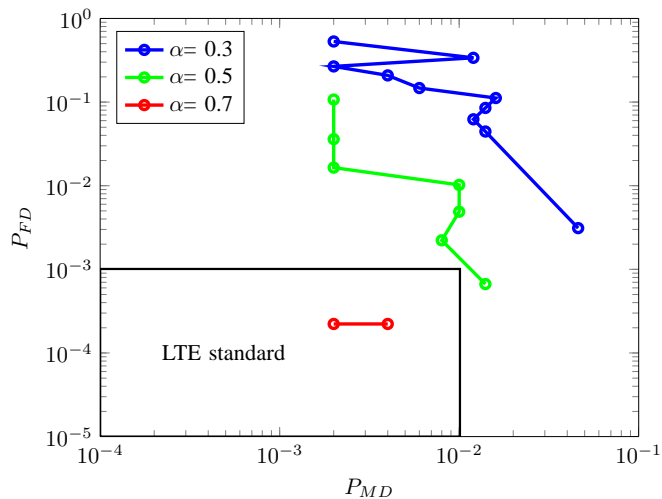

\section{Conclusions}

In this paper, we provided ideas how to enable random access for
many devices in a massive machine-type scenario. In the conceptional approach
as well as the actual algorithms sparsity of user activity and channel impulse
responses plays an a pivotal role. We showed that using such framework efficient
"one shot" random access is possible where users can send a
message without a priori synchronizing with the network. Key is a common
overloaded control channel which is used to jointly detect sparse user
activity and sparse channel profiles. Such common control channel stands in clear
contrast to dedicated control signalling per ressource block, and is thus
more efficent particularly for small ressoure blocks. Since each user
also has channel state information for all subcarriers, there are additional
degrees of freedom to place the ressource blocks.
We analyzed the system theoretically and provided a link between achievable
rates and standard compressing sensing estimates in terms of explicit
expressions and scaling laws. Finally, we supported our
findings with simulations in an LTE-A-like setting allowing "one shot"
sparse random access of 100 users in 1ms with good performance.

\section{Acknowledgements}

This work was carried out within DFG grants WU 598/7-1 and WU 598/8-1 (DFG Priority Program on Compressed Sensing),
and the 5GNOW project, supported by the European Commission within FP7 under grant 318555.
Peter Jung was supported by DFG grant JU-2795/2. We would also like to thank the reviewers for
their valuable comments.

\bibliographystyle{IEEEtran}
\bibliography{globecom15}

\begin{thebibliography}{10}
\providecommand{\url}[1]{#1}
\csname url@samestyle\endcsname
\providecommand{\newblock}{\relax}
\providecommand{\bibinfo}[2]{#2}
\providecommand{\BIBentrySTDinterwordspacing}{\spaceskip=0pt\relax}
\providecommand{\BIBentryALTinterwordstretchfactor}{4}
\providecommand{\BIBentryALTinterwordspacing}{\spaceskip=\fontdimen2\font plus
\BIBentryALTinterwordstretchfactor\fontdimen3\font minus
  \fontdimen4\font\relax}
\providecommand{\BIBforeignlanguage}[2]{{%
\expandafter\ifx\csname l@#1\endcsname\relax
\typeout{** WARNING: IEEEtran.bst: No hyphenation pattern has been}%
\typeout{** loaded for the language `#1'. Using the pattern for}%
\typeout{** the default language instead.}%
\else
\language=\csname l@#1\endcsname
\fi
#2}}
\providecommand{\BIBdecl}{\relax}
\BIBdecl

\bibitem{Wunder2014_COMMAG}
G.~Wunder, M.~Kasparick, S.~ten Brink, F.~Schaich, T.~Wild, I.~Gaspar,
  E.~Ohlmer, S.~Krone, N.~Michailow, A.~Navarro, G.~Fettweis, D.~Ktenas,
  V.~Berg, M.~Dryjanski, S.~Pietrzyk, and B.~Eged, ``{5GNOW: Non-Orthogonal,
  Asynchronous Waveforms for Future Mobile Applications},'' \emph{IEEE
  Communications Magazine, 5G Special Issue}, February 2014, {to appear}.

\bibitem{Wunder2015_ACCESS}
\BIBentryALTinterwordspacing
G.~Wunder, H.~Boche, T.~Strohmer, and P.~Jung, ``{Sparse Signal Processing
  Concepts for Efficient 5G System Design},'' \emph{IEEE ACCESS}, December
  2015, {to appear}. [Online]. Available: \url{http://arxiv.org/abs/1411.0435}
\BIBentrySTDinterwordspacing

\bibitem{Applebaum2012_PHYCOM}
L.~Applebaum, W.~Bajwa, M.~F. Duarte, and R.~Calderbank, ``{Asynchronous
  Code-Division Random Access Using Convex Optimization},'' \emph{Physical
  Communication}, vol.~5, no.~2, p. 129–147, February 2012.

\bibitem{Dekorsy2013_ISWCS}
H.~Schepker, C.~Bockelmann, and A.~Dekorsy, ``{Exploiting Sparsity in Channel
  and Data Estimation for Sporadic Multi-User Communication},'' in
  \emph{$10^{th}$ International Symposium Wireless Communication Systems (ISWCS
  2013)}.\hskip 1em plus 0.5em minus 0.4em\relax Ilmenau, Germany: IEEE Xplore,
  August 2013.

\bibitem{Wunder2006_TSP}
S.~Stanczak, G.~Wunder, and H.~Boche, ``{On Pilot-Based Multipath Channel
  Estimation for Uplink {CDMA} Systems: An Overloaded Case},'' \emph{IEEE
  Transactions on Signal Processing}, vol.~54, no.~2, pp. 512--519, 2006.

\bibitem{Wunder2014_ICC}
G.~Wunder, P.~Jung, and C.~Wang, ``{Compressive Random Access for Post-LTE
  Systems},'' in \emph{IEEE International Conference on Communications (ICC'14)
  -- Workshop on Massive Uncoordinated Access Protocols}, Sydney, Australia,
  May 2014.

\bibitem{Popovski14}
\BIBentryALTinterwordspacing
{Ji, Y. and Stefanovic, C., and Bockelmann, C. and Dekorsy, A., and Popovski,
  P.}, ``{Characterization of Coded Random Access with Compressive Sensing
  based Multiuser Detection},'' 2014. [Online]. Available:
  \url{www.arxiv.com/1404.2119}
\BIBentrySTDinterwordspacing

\bibitem{Wunder2014_GC}
G.~Wunder, M.~Kasparick, S.~ten Brink, F.~Schaich, T.~Wild, I.~Gaspar,
  E.~Ohlmer, S.~Krone, N.~Michailow, A.~Navarro, G.~Fettweis, D.~Ktenas,
  V.~Berg, M.~Dryjanski, S.~Pietrzyk, and B.~Eged, ``{5GNOW: Recent resulrs},''
  in \emph{Proc. IEEE Global Communications Conference (Globecom'14) --
  Workshop From Research to Standards}, Austin, USA, December 2014.

\bibitem{Huang2013}
\BIBentryALTinterwordspacing
{Dhillon, Harpreet S. and Huang, Howard and Viswanathan, Harish and Valenzuela,
  Reinaldo}, ``{Fundamentals of Throughput Maximization with Random Arrivals
  for M2M Communications},'' 2014. [Online]. Available:
  \url{www.arxiv.com/1307.0585}
\BIBentrySTDinterwordspacing

\bibitem{Needell08}
\BIBentryALTinterwordspacing
D.~Needell and J.~A. Tropp, ``{CoSaMP: Iterative signal recovery from
  incomplete and inaccurate samples},'' \emph{Appl. Comp. Harmonic Anal.},
  2008. [Online]. Available: \url{http://front.math.ucdavis.edu/0803.2392}
\BIBentrySTDinterwordspacing

\bibitem{candes:rip2008}
\BIBentryALTinterwordspacing
E.~J. Candes, ``{The restricted isometry property and its implications for
  compressed sensing},'' \emph{Compte Rendus de l'Academie des Sciences}, vol.
  346, no. Paris, Serie I, pp. 589--592, May 2008. [Online]. Available:
  \url{http://linkinghub.elsevier.com/retrieve/pii/S1631073X08000964}
\BIBentrySTDinterwordspacing

\bibitem{Foucart2010}
\BIBentryALTinterwordspacing
S.~Foucart, ``{A note on guaranteed sparse recovery via l1-minimization},''
  \emph{Applied and Computational Harmonic Analysis}, 2010. [Online].
  Available:
  \url{http://www.sciencedirect.com/science/article/pii/S1063520309001158}
\BIBentrySTDinterwordspacing

\bibitem{Foucart2012}
\BIBentryALTinterwordspacing
------, ``{Sparse recovery algorithms: sufficient conditions in terms of
  restricted isometry constants},'' \emph{Approximation Theory XIII: San
  Antonio 2010}, pp. 1--14, 2012. [Online]. Available:
  \url{http://link.springer.com/chapter/10.1007/978-1-4614-0772-0\_5}
\BIBentrySTDinterwordspacing

\bibitem{Rudelson:2007}
M.~Rudelson, R.~Vershynin, and R.~V. On, ``{On sparse reconstruction from
  Fourier and Gaussian measurements},'' \emph{Communications on Pure and
  Applied Mathematics}, vol.~61, no.~8, pp. 1025--1045, Nov. 2007.

\bibitem{Caire2010}
G.~Caire, N.~Jinal, M.~Kobayashi, and N.~Ravindran, ``{Multiuser MIMO
  Achievable Rates With Downlink Training and Channel State Feedback},''
  \emph{IEEE Transactions on Information Theory}, vol.~56, no.~6, pp.
  2845--2866, 2010.

\bibitem{Lapidoth2002_IT}
A.~Lapidoth and S.~Shamai, ``{Fading channels: How perfect need perfect side
  information be?}'' \emph{IEEE Transactions on Information Theory}, vol.~48,
  no.~5, pp. 360--393, May 2002.

\end{thebibliography}

\end{document}